\documentclass[pdflatex,sn-mathphys-num]{sn-jnl}

\usepackage{graphicx}%
\usepackage{multirow}%
\usepackage{amsmath,amssymb,amsfonts}%
\usepackage{amsthm}%
\usepackage{mathrsfs}%
\usepackage[title]{appendix}%
\usepackage{xcolor}%
\usepackage{textcomp}%
\usepackage{manyfoot}%
\usepackage{booktabs}%
\usepackage{algorithm}%
\usepackage{algpseudocode}%
\usepackage{listings}%

\theoremstyle{thmstyleone}%
\newtheorem{theorem}{Theorem}
\theoremstyle{thmstyletwo}%

\theoremstyle{thmstylethree}%
\newtheorem{definition}{Definition}%

\begin{document}

\title[A Bayesian Approach to Low-Discrepancy Subset Selection]{A Bayesian Approach to Low-Discrepancy Subset Selection}


\author*[1]{\fnm{Nathan} \sur{Kirk}}\email{nmk7@st-andrews.ac.uk}

\affil*[1]{\orgdiv{School of Mathematics and Statistics}, \orgname{University of St Andrews}, \orgaddress{\street{Mathematical Institute}, \city{St Andrews}, \postcode{KY16 9SS}, \country{United Kingdom}}}


\abstract{Low-discrepancy designs play a central role in quasi–Monte Carlo methods and are increasingly influential in other domains such as machine learning, robotics and computer graphics, to name a few. In recent years, one such low-discrepancy construction method called \emph{subset selection} has received a lot of attention. Given a large population, one optimally selects a small low-discrepancy subset with respect to a discrepancy-based objective. Versions of this problem are known to be NP--hard. In this text, we establish, for the first time, that the subset selection problem with respect to kernel discrepancies is also NP--hard. Motivated by this intractability, we propose a Bayesian Optimization procedure for the subset selection problem utilizing the recent notion of deep embedding kernels. We demonstrate the performance of the BO algorithm to minimize discrepancy measures and note that the framework is broadly applicable any design criteria.}

\keywords{Quasi-Monte Carlo, Low-Discrepancy, Bayesian Optimization}


\pacs[MSC Classification]{65C05, 60G15}

\maketitle

\section{Introduction}
An important problem in statistics and numerical computation is to approximate a probability distribution of interest by a finite set of representative points. This idea underlies a widely used variant of the Monte Carlo (MC) method known as \emph{quasi--Monte Carlo} (QMC) \cite{HICKKIRKSOR2025,Lem09a}, which replaces the naive random sampling of MC with so-called \emph{low-discrepancy} (LD) design sets whose elements are more evenly spread over the domain. Such designs find applications in numerical approximation, uncertainty quantification, Bayesian inference, computer vision, robotics, and scientific machine learning \cite{paulin2022,FanWin00,Keller2013a,HerSch20a,mishra21,longo21,Chahine2025}. Nowadays, many ``off-the-shelf'' LD designs are available and can typically be divided into two prominent families: digital nets and sequences \cite{DicPil10a} and lattice rules \cite{DicEtal22a}. Both rely heavily on ideas from number theory and abstract algebra. Complementary to these classical constructions, LD designs have also been generated computationally in recent years using modern optimization techniques, including neural network-based approaches \cite{ruschkirk24} and nonlinear programming methods \cite{cle25}.

In the classical QMC setting, discrepancy measures the irregularity of distribution of a point set 
$P_N := \{\mathbf{X}_i\}_{i=1}^N \subset [0,1]^d$ 
with respect to the uniform distribution on the unit hypercube. Perhaps the most popular discrepancy in the literature is the so-called $L_\infty$ \emph{star discrepancy}.

\begin{definition}[$L_\infty$ Star Discrepancy]
   Given a point set $P_N = \{\mathbf{X}_i\}_{i=1}^N \subset [0,1]^d$, the $L_\infty$ star discrepancy of $P_N$ is defined as
    \begin{equation}
        D_\infty^*(P_N) := \sup_{\boldsymbol{x}\in[0,1]^d} \left| \frac{\#(P_N \cap [0,\boldsymbol{x}))}{N} -\lambda([0,\boldsymbol{x})) \right|.\label{eq:star_disc}
        \end{equation}
    where $\lambda(\cdot)$ denotes the usual Lebesgue measure.
\end{definition}

This notion has been studied extensively over several decades and because of its nature---testing the uniformity of subintervals of the unit hypercube---it is often referred to as the geometric interpretation of discrepancy \cite{KuiNie74, matousek1999geometric}. On the other hand, discrepancy can also be viewed from a kernel methods perspective, an approach commonly attributed to \cite{hickernell1998}. In this paper, we consider a more general notion of kernel discrepancy than that introduced in \cite{hickernell1998}, namely the \emph{maximum mean discrepancy} \cite{gretton2006}, which encompasses several classical and modern kernel discrepancy settings within a unified framework.

\begin{definition}[Maximum Mean Discrepancy]
Let $P$ and $Q$ be Borel probability measures on a measurable space $\mathcal{X}$. 
Let $\mathcal{H}$ be a reproducing kernel Hilbert space (RKHS) with reproducing kernel $k(\cdot,\cdot)$. 
The squared maximum mean discrepancy between $P$ and $Q$ is defined as
\begin{equation}\label{eq:MMDoriginal}
(\mathrm{D}^k_2(P,Q))^2
:= \sup_{\|f\|_{\mathcal{H}} \leq 1}
\left( \int f \,\mathrm{d}P - \int f \,\mathrm{d}Q \right)^2,
\end{equation}
which can be written equivalently as
\[
\mathbb{E}\big[k(\mathbf{X},\mathbf{X}') + k(\mathbf{Y},\mathbf{Y}') - 2k(\mathbf{X},\mathbf{Y})\big],
\]
where $\mathbf{X},\mathbf{X}' \sim P$ and $\mathbf{Y},\mathbf{Y}' \sim Q$ are independent copies.
\end{definition}

The maximum mean discrepancy specializes to two cases that will be of particular interest in this text. 
The first corresponds to classical $L_2$ discrepancies in QMC, obtained by choosing a positive definite kernel $k:[0,1]^d \times [0,1]^d \rightarrow \mathbb{R}$. 
Without loss of generality, let $P$ denote the empirical distribution induced by a design set $P_N \subset [0,1]^d$, and let $Q$ denote the uniform distribution on the unit hypercube. 
Then, with a slight abuse of notation, the $L_2$ discrepancy of $P_N$ is defined as
\begin{equation}\label{eq:L2discrepancy}
(\mathrm{D}^k_2(P_N))^2
:= \iint_{[0,1]^d} k(\boldsymbol{x}, \boldsymbol{y}) \, \mathrm{d} \boldsymbol{x} \, \mathrm{d}\boldsymbol{y}
- \frac{2}{N} \sum_{i=1}^N \int_{[0,1]^d} k(\mathbf{X}_i, \boldsymbol{y}) \, \mathrm{d}\boldsymbol{y} 
+ \frac{1}{N^2} \sum_{i,j=1}^N k(\mathbf{X}_i, \mathbf{X}_j),
\end{equation}
for specific choices of $k$ yielding, for example, the star $D_2^*$, extreme $D^{\mathrm{ext}}_2$, and periodic $D_2^{\mathrm{per}}$ discrepancies.

For the second case, when both Borel measures $P$ and $Q$ are empirical, the maximum mean discrepancy reduces to a kernel-based discrepancy between two finite design sets, enabling direct comparison of samples drawn either from the same or from different underlying distributions. 
In this setting, again with a slight abuse of notation, for $P_N = \{\mathbf{X}_i\}_{i=1}^N\text{ and } P_m = \{\mathbf{Y}_i\}_{i=1}^m \subset [0,1]^d$, one may write \eqref{eq:MMDoriginal} as
\begin{equation}\label{eq:discreteMMD}
(\mathrm{D}^k_2(P_N, P_m))^2 
:= \frac{1}{N^2}\sum_{i,j=1}^N k(\mathbf{X}_i, \mathbf{X}_j)
-\frac{2}{mN} \sum_{i=1}^N \sum_{j=1}^m k(\mathbf{X}_i, \mathbf{Y}_j)
+ \frac{1}{m^2} \sum_{i,j=1}^m k(\mathbf{Y}_i, \mathbf{Y}_j).
\end{equation}

Together, \eqref{eq:star_disc}, \eqref{eq:L2discrepancy}, and \eqref{eq:discreteMMD} define the objective functions that we seek to minimize in this work.

\subsection{The Subset Selection Problem}

One particular computational LD construction strategy that has received considerable attention in recent years is the \emph{subset selection} procedure. It was originally formulated for the $L_\infty$ star discrepancy \eqref{eq:star_disc} in \cite{cle22, cle24_heuristic} and later generalized to kernel discrepancy measures in \cite{optimizingkerneldiscrepancy2025}. 

The idea is simple: given a large population of $N$ design points, one seeks a subset of size $m \ll N$ that is, in some sense, most representative of the full set. Representativeness will be measured in our setting by one of the discrepancy-based objectives introduced above.

\begin{center}
  \begin{minipage}{0.7\linewidth}
    \itshape
    Given a population set $P_N \subset [0,1]^d$ of size $N$, the goal is to select a subset $P_m \subset P_N$ of size $m \ll N$ that minimizes a discrepancy function, $D(\cdot)$. That is, we wish to solve the following optimization problem:
\begin{equation}\label{eq:optim_problem}
\underset{P_ m \subset P_N,\ |P_m| = m}{\textnormal{argmin}} D(P_m).
\end{equation}
  \end{minipage}
\end{center} 

Clearly, even for moderate values of $N$ and $m$, exhaustive search or exact methods for finding an optimum of this combinatorial problem are computationally infeasible. In fact, it was established in \cite[Theorem 3.1]{cle22} that \eqref{eq:optim_problem} is NP-hard when the discrepancy measure is the $L_\infty$ star discrepancy \eqref{eq:star_disc}. As a result, a range of heuristic approaches have been proposed for solving \eqref{eq:optim_problem}, most of which are based on swap-based local search strategies. We refer to \cite{clethesis24} for further reading on subset selection.

\subsection{Our Contribution}

In this work, we propose to address \eqref{eq:optim_problem} using Bayesian optimization. First, we generalize the hardness result and show in Section \ref{sec:NPHard} that subset selection with respect to a wide class of so-called kernel discrepancy measures is also, in fact, NP–hard. Next, we discuss the recently introduced framework of \emph{deep kernel embeddings} \cite{deepembeddings2020} to construct strictly positive definite covariance kernels for Gaussian Process surrogates over set-valued inputs which is key for our subset selection problem. Our full Bayesian optimization method is detailed in Section \ref{sec:BOforSubset}, and finally, numerical results illustrating its performance are presented in Section \ref{sec:results}.

\section{Kernel Subset Selection is NP--Hard}\label{sec:NPHard}

As discussed in the Introduction, subset selection is known to be computationally intractable in the classical $L_\infty$ star discrepancy setting \cite[Theorem 3.1]{cle22}. In particular, computing the $L_\infty$ star discrepancy itself is NP--hard \cite{starNPhard}, which strongly suggests that optimizing it over subsets is likewise intractable. This result provided theoretical justification for the widespread use of heuristic and approximate algorithms in that context.

We now extend this perspective to the kernel-based setting and show that, over the class of positive definite kernels, subset selection with respect to the maximum mean discrepancy is NP--hard in general. Specifically, we reduce from a known NP--hard setting, the constrained quadratic 0-1 problem \cite{MurtyKabadi1987}.

\begin{theorem}\label{thm:1}
For the inputs $P_N=\{\mathbf{X}_i\}_{i=1}^N\subset[0,1]^d$, an integer $m\le N$, a threshold $\tau>0$ and a positive definite kernel $k:[0,1]^d \times [0,1]^d \rightarrow \mathbb{R}$, the decision problem
\[
\exists\,P_m\subset P_N,\ |P_m|=m
\quad\text{such that}\quad
D_2^k(P_m)\le \tau
\]
is NP--hard. 
\end{theorem}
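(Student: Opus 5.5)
Because the kernel $k$ is part of the input, the plan is to let it encode an arbitrary quadratic 0-1 instance. We reduce from the cardinality-constrained quadratic 0-1 problem of \cite{MurtyKabadi1987}: given a symmetric matrix $Q\in\mathbb{R}^{N\times N}$, a linear term $c\in\mathbb{R}^N$, an integer $m$ and a threshold $t$, decide whether some $x\in\{0,1\}^N$ with $\sum_i x_i=m$ satisfies $x^\top Q x + c^\top x\le t$. A convenient NP--hard special case is the $m$-clique / densest-$m$-subgraph problem, where $Q$ is minus an adjacency matrix. Fixing $|P_m|=m$ removes the normalisation, so that \eqref{eq:L2discrepancy} becomes
\[
(D_2^k(P_m))^2=\iint k - \frac{2}{m}\sum_{i\in S}\int k(\mathbf{X}_i,\boldsymbol{y})\,d\boldsymbol{y} + \frac{1}{m^2}\sum_{i,j\in S}k(\mathbf{X}_i,\mathbf{X}_j),
\]
where $S$ is the index set of the subset. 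This is exactly a constant plus a linear term plus a quadratic form in the indicator vector $x$ of $S$.

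To realise a given instance, place $N$ distinct points $\mathbf{X}_i$ in $[0,1]^d$ (for example $d=1$ and $\mathbf{X}_i=i/(N+1)$). The task is then to build a positive definite $k$ whose Gram matrix on these points is $G:=Q+\lambda I$, with $\lambda$ chosen so that $G\succ 0$. On $\{0,1\}$-vectors with $\sum x_i=m$, the shift contributes $x^\top(\lambda I)x=\lambda m$, which is a constant, so minimising $x^\top G x$ is equivalent to minimising $x^\top Qx$. To keep control of the integral terms, take $k(\boldsymbol{x},\boldsymbol{y})=\sum_{i,j}G_{ij}\phi_i(\boldsymbol{x})\phi_j(\boldsymbol{y})$, where the $\phi_i$ are continuous bumps with disjoint supports, $\phi_i(\mathbf{X}_j)=\delta_{ij}$, and each $\phi_i$ has the same integral $\alpha$.

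This $k$ is positive semidefinite with $k(\mathbf{X}_i,\mathbf{X}_j)=G_{ij}$. Its single integrals are $\int k(\mathbf{X}_i,\boldsymbol{y})\,d\boldsymbol{y}=\alpha\sum_j G_{ij}=\alpha(Q\mathbf{1}+\lambda\mathbf{1})_i$. This row-sum term is linear in $x$, and there are two ways to deal with it: absorb it into the instance by choosing $c$ appropriately, or, more simply, start from a hard instance in which $Q$ has constant row sums. A regular-graph clique problem has this property and is still NP--hard. If strict positive definiteness on all of $[0,1]^d$ is required, add $\varepsilon k_0$ with $k_0$ a standard strictly positive definite kernel (for example a Gaussian). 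With $\varepsilon$ small relative to the gap between objective values, which are integers scaled by $1/m^2$, the perturbation cannot change which decision answer is correct.

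Finally, set $\tau^2$ equal to the constant plus the scaled threshold $t/m^2$, and check that the resulting decision answers coincide. The reduction is polynomial because $G$, $\alpha$ and $\tau$ have polynomial bit size and $k$ is given explicitly through $G$ and the bumps. The main obstacle is the linear integral term together with strict positive definiteness, which is why the constant-row-sum instance and the $\varepsilon$-gap argument are needed. I would handle this first by reducing from $m$-clique on regular graphs, where every term apart from $x^\top Ax$ is constant on feasible $x$.
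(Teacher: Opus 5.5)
Your proof is correct and shares the paper's skeleton. You write $m^2D_2^k(P_m)^2$ as a constant plus a linear term plus a quadratic form in the indicator $\mathbf z$, reduce from the cardinality-constrained quadratic 0-1 problem, shift the Gram matrix by a multiple of $I_N$ (harmless since $\mathbf z^\top\mathbf z=m$), and realise the kernel through a rank-$N$ feature map. Where you differ is the treatment of the linear integral term.

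The paper keeps the source instance general. With $K=A^\top A$ it takes $\phi(\mathbf X_i)=A\boldsymbol e_i$ and $\int\phi=A^{-\top}\mathbf b$. In your language this amounts to giving the bumps unequal, possibly negative, integrals $K^{-1}\mathbf b$. So an arbitrary $\mathbf b=-\mathbf c/(2m)$ is realised, and the constant is forced to be $C=\mathbf b^\top K^{-1}\mathbf b$.

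You fix equal integrals $\alpha$, so $\mathbf b=\alpha G\mathbf 1$ is forced. You make this harmless by reducing from clique on regular graphs, which is NP--hard (e.g.\ by complementing independent set on cubic graphs). There $G\mathbf 1$ is a constant vector, so every term except the edge count of the chosen subgraph is constant on feasible $\mathbf z$.

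What each route buys:
\begin{itemize}
\item The paper's route gets the full quadratic 0-1 problem in one stroke.
\item Yours gives a concrete combinatorial source and a continuous kernel with rational data and no matrix square root.
\item Through the $\varepsilon k_0$ perturbation, yours also gives a strictly positive definite kernel. The paper's rank-$N$ kernel is positive definite only in the semidefinite sense.
\end{itemize}

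Two small points. First, your ``absorb into $c$'' option also works with equal integrals, because $z_i^2=z_i$ lets you move linear terms onto the diagonal. Replacing $G$ by $Q+\mathrm{diag}(\mathbf d)$ shifts the net linear coefficient of $m^2D_2^k(P_m)^2$ by $(1-2m\alpha)\mathbf d$. So any $\mathbf c$ is reachable when $2m\alpha<1$. Adding a large multiple of $\mathbf 1$ to $\mathbf d$ keeps $G\succ 0$ at the cost of a constant on feasible $\mathbf z$.

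Second, after perturbing, $\tau^2$ must sit strictly inside the gap rather than at the unperturbed threshold. For example, take the unperturbed threshold plus $1/m^2$, with $\varepsilon\max_{P_m}D_2^{k_0}(P_m)^2<1/m^2$. Otherwise $\varepsilon D_2^{k_0}(P_m)^2>0$ would push a yes-instance lying exactly at the threshold above $\tau$.
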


\begin{proof} From a population set $P_N = \{\mathbf{X}_i\}_{i=1}^N\subset[0,1]^d$ 
and an index subset $S \subset \{1,\ldots,N\}$ with $|S|=m$, let 
$P_m:=\{\mathbf{X}_i : i\in S\}$ and for a positive definite kernel 
$k:[0,1]^d\times[0,1]^d\to\mathbb R$, define
\[
D_2^k(P_m) = \sqrt{\iint k(\boldsymbol{x},\boldsymbol{y})\,d\boldsymbol{x}\,d\boldsymbol{y}
-\frac{2}{m}\sum_{i\in S}\int k(\mathbf{X}_i,\boldsymbol{y})\,d\boldsymbol{y}
+\frac{1}{m^2}\sum_{i,j\in S}k(\mathbf{X}_i,\mathbf{X}_j)}.
\]

\noindent
For such $k$, write
\[
C := \iint k(\boldsymbol{x},\boldsymbol{y})\,d\boldsymbol{x}\,d\boldsymbol{y},
\qquad
b_i := \int k(\mathbf{X}_i,\boldsymbol{y})\,d\boldsymbol{y},
\qquad
K_{ij}:=k(\mathbf{X}_i,\mathbf{X}_j).
\]
Let $\mathbf z\in\{0,1\}^N$ be the indicator of $S$, i.e., \ $z_i=\mathbf 1_{\{i\in S\}}$.
Then
\[
D_2^k(P_m)^2
= C - \frac{2}{m}\sum_{i=1}^N b_i z_i
  + \frac{1}{m^2}\sum_{i,j=1}^N K_{ij}z_i z_j.
\]
Multiplying by $m^2$ gives
\begin{equation}\label{eq:disc-quadratic}
m^2 D_2^k(P_m)^2
= m^2C -2m\,\mathbf b^\top \mathbf z + \mathbf z^\top K \mathbf z.
\end{equation}
\noindent
Hence, minimizing $D_2^k(P_m)$ over all $m$--element subsets is equivalent to minimizing
a quadratic polynomial in $\mathbf z$ under the constraint $\sum_i z_i=m$.

We prove the hardness result by reducing from the NP--hard decision problem known as the constrained quadratic 0-1 problem \cite{MurtyKabadi1987, GaoLi2013}: given a symmetric positive 
semidefinite matrix $Q\in\mathbb Q^{N\times N}$, vector 
$\mathbf c\in\mathbb Q^N$, integer $m$, and threshold $T$, decide whether
there exists $\mathbf z\in\{0,1\}^N$ with $\sum_i z_i=m$ such that
\begin{equation}
\mathbf z^\top Q \mathbf z + \mathbf c^\top \mathbf z \le T.
\label{eq:qubo}
\end{equation}
Given an instance $(Q,\mathbf c,m,T)$, define
\[
K := Q + I_N,
\qquad
\mathbf b := -\frac{1}{2m}\mathbf c.
\]
Since $Q\succeq 0$, we have $K\succ 0$. For any feasible $\mathbf z$ with $\sum_i z_i=m$,
\[
\mathbf z^\top K \mathbf z
= \mathbf z^\top Q \mathbf z + \mathbf z^\top \mathbf z
= \mathbf z^\top Q \mathbf z + m.
\]
Substituting into \eqref{eq:disc-quadratic} gives
\[
m^2 D_2^k(P_m)^2
= m^2 C + \mathbf z^\top Q \mathbf z + \mathbf c^\top \mathbf z + m.
\]
Thus
\[
\mathbf z^\top Q \mathbf z + \mathbf c^\top \mathbf z \le T
\quad\Longleftrightarrow\quad
m^2 D_2^k(P_m)^2 \le T + m + m^2 C.
\]
and by setting $\tau^2 := (T+m+m^2 C)/m^2$, we obtain equivalence of the two decision problems. 

It now remains to construct an positive definite kernel on $[0,1]^d$ realizing the quantities $(K,\mathbf b,C)$. Since $K\succ 0$, let $A$ be an invertible matrix with $A^\top A = K$.
Define
\[
\mu_f := A^{-T}\mathbf b,
\qquad
C := \|\mu_f\|_2^2 = \mathbf b^\top K^{-1}\mathbf b.
\]
We now construct a measurable feature map 
$\phi:[0,1]^d\to\mathbb R^N$ such that
\[
\phi(\mathbf X_i)=A \boldsymbol{e}_i
\quad\text{for } i=1,\dots,N,
\qquad
\int_{[0,1]^d}\phi(\boldsymbol{x})\,d\boldsymbol{x}=\mu_f.
\]
This can be done by defining $\phi$ to be piecewise constant on a partition
of $[0,1]^d$ with the desired integral, and redefining its values at the
finitely many points $\mathbf X_i$ (which does not affect the integral). Next, define
\[
k(\boldsymbol{x},\boldsymbol{y}):=\phi(\boldsymbol{x})^\top \phi(\boldsymbol{y}).
\]
Then $k$ is positive definite on $[0,1]^d$, and:

\begin{align*}
k(\mathbf X_i,\mathbf X_j) &=K_{ij}, \\
\int k(\mathbf X_i,\boldsymbol{y})\,d\boldsymbol{y} &= b_i, \\
\iint k(\boldsymbol{x},\boldsymbol{y})\,d\boldsymbol{x}\,d\boldsymbol{y}&=C.
\end{align*}

Thus the discrepancy computed from this $k$ satisfies 
\eqref{eq:disc-quadratic}. Finally, the NP--hard instance \eqref{eq:qubo} has a feasible solution
if and only if there exists a subset $P_m\subset P_N$
with $D_2^k(P_m)\le\tau$.\end{proof}

This resolves an open question posed by F. Clément \cite{clethesis24}.

\section{Subset Selection via Bayesian Optimization}

We now describe our Bayesian optimization approach for solving the subset selection problem \eqref{eq:optim_problem}. 
The key technical ingredient is a class of strictly positive definite kernels defined on finite subsets, known as \emph{deep embedding} kernels. 
We first review their construction and then describe the full optimization pipeline.

\subsection{Deep Embedding Kernels}
\label{sec:DEkernels}

Many learning and optimization problems involve set-valued inputs. 
Examples arise in natural language processing \cite{LLMsets2017}, clustering \cite{clustersets2019}, and in general experimental design. In our design setting, for a fixed population $P_N = \{\mathbf{X}_i\}_{i=1}^N \subset [0,1]^d$, let $\mathcal{S}_m(P_N)$ denote the set of all $m$-element subsets of $P_N$. Then the optimization variable is a subset $P_m \in \mathcal{S}_m(P_N)$, i.e., a set-valued input. 

Standard Gaussian process (GP) methodology is designed for Euclidean inputs and does not directly provide kernels on finite subsets. 
A classical construction for sets is the \emph{double-sum (DS) kernel} \cite{DS_intro_1999,DS_intro2_2002}, defined for subsets $P_m, \tilde{P}_m \in \mathcal{S}_m(P_N)$ by
\begin{equation}
\label{eq:DS_kernel}
k_0(P_m,\tilde P_m)
=
\frac{1}{|P_m|\,|\tilde P_m|}
\sum_{\mathbf{X} \in P_m}
\sum_{\mathbf{X}' \in \tilde P_m}
k_X(\mathbf{X},\mathbf{X}'),
\end{equation}
where $k_X : [0,1]^d \times [0,1]^d \to \mathbb{R}$ is a positive definite kernel acting on the base space. While $k_0$ is positive definite, it is generally not strictly positive definite, which may lead to problems downstream like singular Gram matrices in GP regression and optimization.

To address this, \cite{deepembeddings2020} introduced \emph{deep embedding (DE) kernels}. 
The construction begins by embedding each subset into the reproducing kernel Hilbert space (RKHS) $\mathcal{H}_{k_X}$ via the empirical kernel mean embedding
\[
\mathcal{E}(P_m)
=
\frac{1}{|P_m|}
\sum_{\mathbf{X} \in P_m}
k_X(\cdot,\mathbf{X}).
\]
The canonical RKHS distance between two subsets is then
\begin{equation}
\label{eq:RKHS_distance}
d_{\mathcal E}(P_m,\tilde P_m)
=
\|\mathcal{E}(P_m)-\mathcal{E}(\tilde P_m)\|_{\mathcal{H}_{k_X}}
=
\sqrt{
k_0(P_m,P_m)
+
k_0(\tilde P_m,\tilde P_m)
-
2k_0(P_m,\tilde P_m)
}.
\end{equation}
and a DE kernel is obtained by composing this distance with a radial positive definite kernel on Hilbert space:
\begin{equation}
\label{eq:DEkernel}
k_{\mathrm{DE}}(P_m,\tilde P_m)
=
k_H\!\left(
d_{\mathcal E}(P_m,\tilde P_m)
\right),
\end{equation}
where $k_H : [0,\infty) \to \mathbb{R}$ is such that $(h,h') \mapsto k_H(\|h-h'\|_{\mathcal H})$ is positive definite for any Hilbert space $\mathcal H$. Under mild conditions, $k_{\mathrm{DE}}$ is in fact \emph{strictly} positive definite acting on the set of all subsets \cite[Propositions 1–3]{deepembeddings2020}.

\subsection{Bayesian Optimization for Subset Selection}
\label{sec:BOforSubset}

We now describe our BO framework over the discrete domain $\mathcal{S}_m(P_N)$. Let $k$ be a positive definite kernel on $[0,1]^d$ and define the objective
\[
f(P_m)
=
D_2^{k}(P_m),
\qquad
P_m \in \mathcal{S}_m(P_N).
\]
In practice, to improve numerical stability, we model the logarithm of the discrepancy,
\[
f(P_m)
=
\log D_2^{k}(P_m).
\]
We place a zero-mean Gaussian process prior on $f$ with covariance kernel given by a DE kernel of the form
\[
k_{\mathrm{DE}}(P_m,\tilde P_m)
=
\exp\!\left(
-
\frac{
d_{\mathcal E}(P_m,\tilde P_m)^2
}{
2\theta_H^2
}
\right),
\]
where $d_{\mathcal E}$ is defined in \eqref{eq:RKHS_distance}. 
At the point level we use a radial basis function (RBF) kernel $k_X(\boldsymbol{x},\boldsymbol{x}')
=
\exp\!\left(
-\|\boldsymbol{x}-\boldsymbol{x}'\|^2/2\sigma_X^2
\right)$.
The hyperparameters $(\sigma_X,\theta_H)$ are selected at each BO iteration by maximizing the GP log marginal likelihood over log-spaced grids, after standardizing the observed targets to zero mean and unit variance.

Let $\{P_m^{(i)}\}_{i=1}^t$ denote the subsets evaluated up to iteration $t$, and define
\[
f_{\min}
=
\min_{1 \le i \le t} f(P_m^{(i)}).
\]
For minimization, we employ the Expected Improvement acquisition function
\[
\mathrm{EI}(P_m)
=
\mathbb{E}\!\left[
(f_{\min}-f(P_m))_{+}
\,\middle|\,
\mathcal{D}_t
\right],
\]
which admits the standard closed-form expression in terms of the GP posterior mean $\mu(P_m)$ and variance $\sigma^2(P_m)$:
\[
\mathrm{EI}(P_m)
=
(f_{\min}-\mu(P_m))
\Phi\!\left(
\frac{f_{\min}-\mu(P_m)}{\sigma(P_m)}
\right)
+
\sigma(P_m)
\phi\!\left(
\frac{f_{\min}-\mu(P_m)}{\sigma(P_m)}
\right),
\]
where $\Phi$ and $\phi$ denote the standard normal CDF and PDF.

Since direct maximization of EI over the combinatorial domain $\mathcal{S}_m(P_N)$ is, in itself, infeasible, we approximate the maximization via local search over \emph{1-swap neighborhoods}. 
Given a subset $P_m$, a 1-swap neighbor is any subset of the form
\[
\bar P_m
=
P_m \setminus \{\mathbf{X}_i\}
\cup
\{\mathbf{X}_j\},
\qquad
\mathbf{X}_i \in P_m,
\quad
\mathbf{X}_j \in P_N \setminus P_m.
\]

\begin{algorithm}
\caption{BO-DE for Low-Discrepancy Subset Selection}
\label{alg:bo-de}
\begin{algorithmic}[1]
\State \textbf{Input:} $P_N=\{\mathbf{X}_i\}_{i=1}^N$, subset size $m$, discrepancy $D(\cdot)$, $n_{\mathrm{init}}$, iterations $T$, grids $\Sigma_X$ and $\Theta_H$, hill-climb steps $S_{\max}$, neighbors per step $M$, restarts $R$
\State \textbf{Output:} Best subset $P_m^\star\in\mathcal S_m(P_N)$ found

\State Sample $P_m^{(1)},\dots,P_m^{(n_{\mathrm{init}})}\sim\mathrm{Unif}(\mathcal S_m(P_N))$
\State Evaluate $y_i \gets D(P_m^{(i)})$ and set $\tilde y_i \gets \log(\max\{y_i,\varepsilon\})$ \emph{(or $\tilde y_i\gets y_i$)}
\State $\mathcal D \gets \{(P_m^{(i)},\tilde y_i)\}_{i=1}^{n_{\mathrm{init}}}$

\For{$t=1$ \textbf{to} $T$}
  \State Choose $(\sigma_X,\theta_H)\in\Sigma_X\times\Theta_H$ maximizing GP log marginal likelihood on standardized targets
  \State Fit GP surrogate on $\mathcal D$ with $k_{\mathrm{DE}}^{(\sigma_X,\theta_H)}$;
  \State \hspace{1em} let $f_{\min}=\min_{(P_m,\tilde y)\in\mathcal D}\tilde y$ and $P_m^{\mathrm{best}}=\arg\min_{(P_m,\tilde y)\in\mathcal D}\tilde y$
  \State Run EI hill-climb from $P_m^{\mathrm{best}}$ and from $R$ random starts;
  \State \hspace{1em} each step samples $\le M$ 1-swap neighbors and moves if EI increases; collect endpoints $\mathcal C$
  \State Select $P_m^{\mathrm{next}}=\arg\max_{P_m\in\mathcal C}\mathrm{EI}(P_m)$
  \State Evaluate $y^{\mathrm{next}} \gets D(P_m^{\mathrm{next}})$ and transform $\tilde y^{\mathrm{next}}$;
  \State \hspace{1em} append $(P_m^{\mathrm{next}},\tilde y^{\mathrm{next}})$ to $\mathcal D$
\EndFor

\State \textbf{return} $P_m^\star=\arg\min_{(P_m,\tilde y)\in\mathcal D}\tilde y$
\State \hspace{1em} \emph{(equivalently $\arg\min D(P_m)$ when $\tilde y=\log D(P_m)$).}
\end{algorithmic}
\end{algorithm}

Starting from an initial subset, we sample a finite collection of 1-swap neighbors and move to a neighbor with strictly larger EI whenever one exists. 
The procedure terminates at a local maximizer of EI. 
To encourage exploration, each BO iteration performs one hill-climb initialized at the current best subset and several additional hill-climbs initialized from randomly sampled subsets. 
The next evaluation point is selected as the local maximizer with the largest EI among these runs.

\section{Numerical Experiments}\label{sec:results}

We now evaluate the proposed Bayesian optimization framework for subset selection. Following the Introduction, we consider three objectives which cover several related settings selecting an $m$–point design $P_m$ from a larger population $P_N\subset[0,1]^d$ to approximate a reference measure, which may be discrete or continuous. When the reference measure is discrete, \eqref{eq:discreteMMD} will be used as the objective function and \eqref{eq:star_disc} or \eqref{eq:L2discrepancy} otherwise. In the continuous case, we are interested in optimizing the classical $L_\infty$ star discrepancy, on which the subset selection problem was originally framed in the QMC literature, and also the $L_2$-type kernel discrepancies.

\subsection{Experimental Setup}

Our goal is to achieve substantial discrepancy minimization using a limited number of true objective evaluations. For this reason, in all experiments we restrict the evaluation budget to $100$ discrepancy evaluations: $50$ initial evaluations used to fit the Gaussian Process surrogate, followed by $50$ additional evaluations selected sequentially by maximizing the EI acquisition function.

We focus on two-dimensional settings throughout and compare BO-DE against three baseline methods:

\begin{itemize}
    \item \textbf{Random:} Subsets are sampled uniformly at random from $\mathcal S_m(P_N)$ and evaluated using the true discrepancy.
    
    \item \textbf{BO-DS:} This method replaces the DE kernel with the double-sum (DS) kernel, with the same fixed RBF base kernel. Acquisition optimization is performed using the same EI hill-climbing procedure over sampled 1-swap neighborhoods as in the DE method.
    
    \item \textbf{Greedy Local Swap (GLS):} This method directly optimizes the discrepancy without a surrogate model. Starting from a collection of randomly sampled subsets, the best initial subset is selected and iteratively improved via 1-swap moves. At each step, a sampled collection of 1-swap neighbors is evaluated using the true discrepancy, and the algorithm moves to the best improving neighbor. If no improvement is found, a random restart is performed until the evaluation budget is exhausted.
\end{itemize}

All methods are compared under equal discrepancy evaluations budgets rather than wall-clock time.

\subsection{Results}

\textbf{Experiment 1.} We minimize the symmetric $L_2$ discrepancy \eqref{eq:L2discrepancy} induced by the product kernel $k(\boldsymbol{x},\boldsymbol{x}')=\prod_{j=1}^d (1 - 2|x_j - x'_j|)/4$. This choice corresponds to a classical $L_2$ discrepancy widely used in quasi-Monte Carlo methods, though other $L_2$ variants could equally be considered within our framework.
A population $P_N$ of size $N=1000$ is sampled uniformly from $[0,1]^2$, and subsets of size $m=25$ are selected.

Figure~\ref{fig:symmetric_results} reports the best discrepancy value observed as a function of the evaluation count. BO-DE achieves the smallest final discrepancy and produces the largest reduction relative to the initial random design. GLS yields the second-best performance, improving upon the initial subsets but plateauing once no improving swaps are found. In contrast, both BO-DS and random sampling show no improvement beyond the initial $50$ random evaluations, indicating limited effectiveness in this budget-restricting setting.

\begin{figure}[h!]
    \centering
    \includegraphics[width=0.66\linewidth]{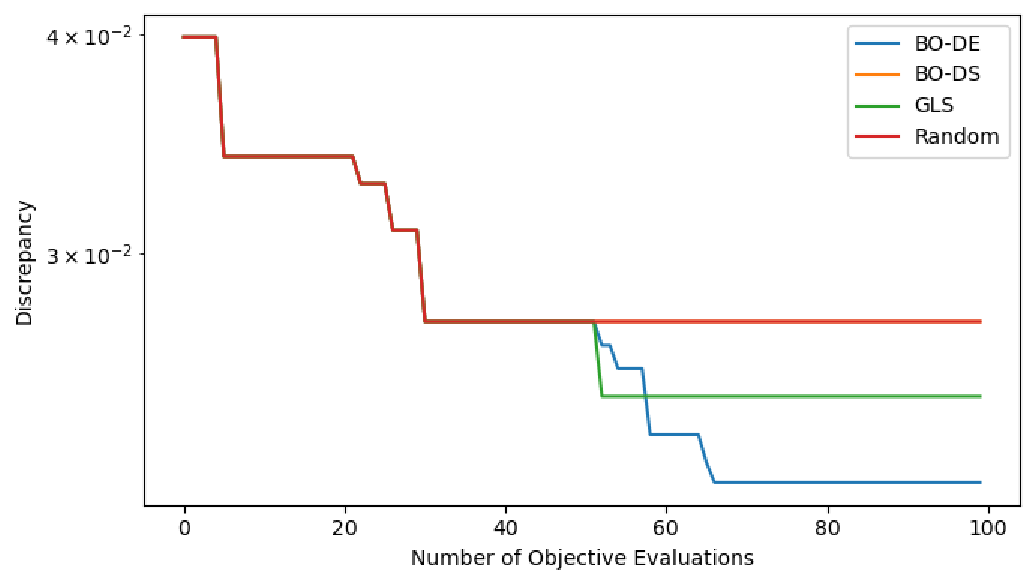}
    \caption{Symmetric discrepancy minimization for $N=1000$ and $m=25$.}
    \label{fig:symmetric_results}
\end{figure}

\noindent
\textbf{Experiment 2.} We next consider the maximum mean discrepancy as defined in \eqref{eq:discreteMMD}. A population of size $N=1000$ is sampled from a two-component Gaussian mixture, and a subset of size $m=25$ is selected under the same evaluation budget as in the previous experiments.

The results are shown in Figure~\ref{fig:MMDgaussian}. Consistent with the earlier discrepancy experiments, BO-DE attains the smallest final discrepancy value within the fixed budget. GLS yields moderate improvement, while both random sampling and BO with the DS kernel show limited reduction beyond the initial designs.

For visual comparison, the final subsets obtained by GLS and BO-DE are displayed as black points over the original population (shown in grey). The BO-DE subset exhibits more uniform coverage of the mixture components, with fewer visible clusters and sparse regions than the GLS solution.

\begin{figure}[t]
  \centering
  \includegraphics[width=0.32\textwidth]{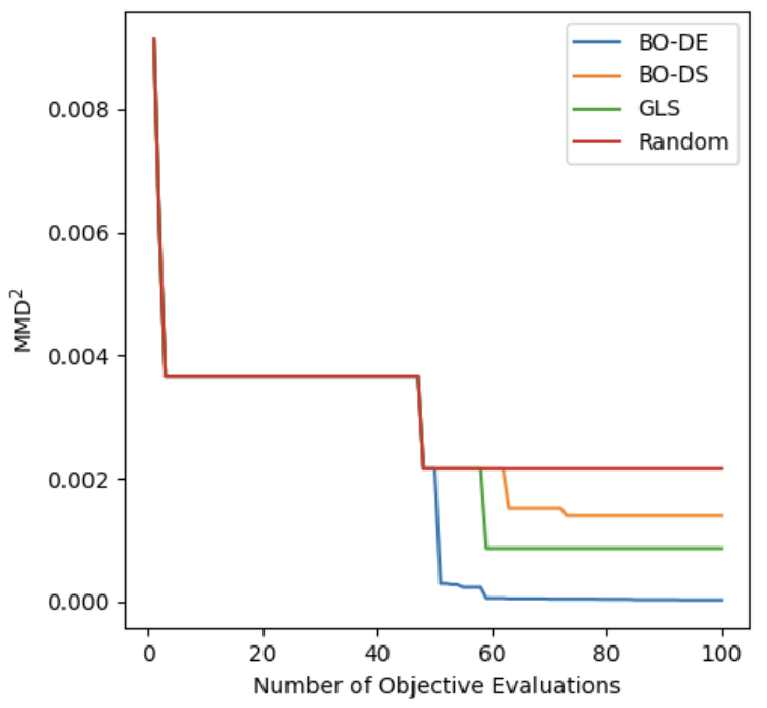}
  \includegraphics[width=0.32\textwidth]{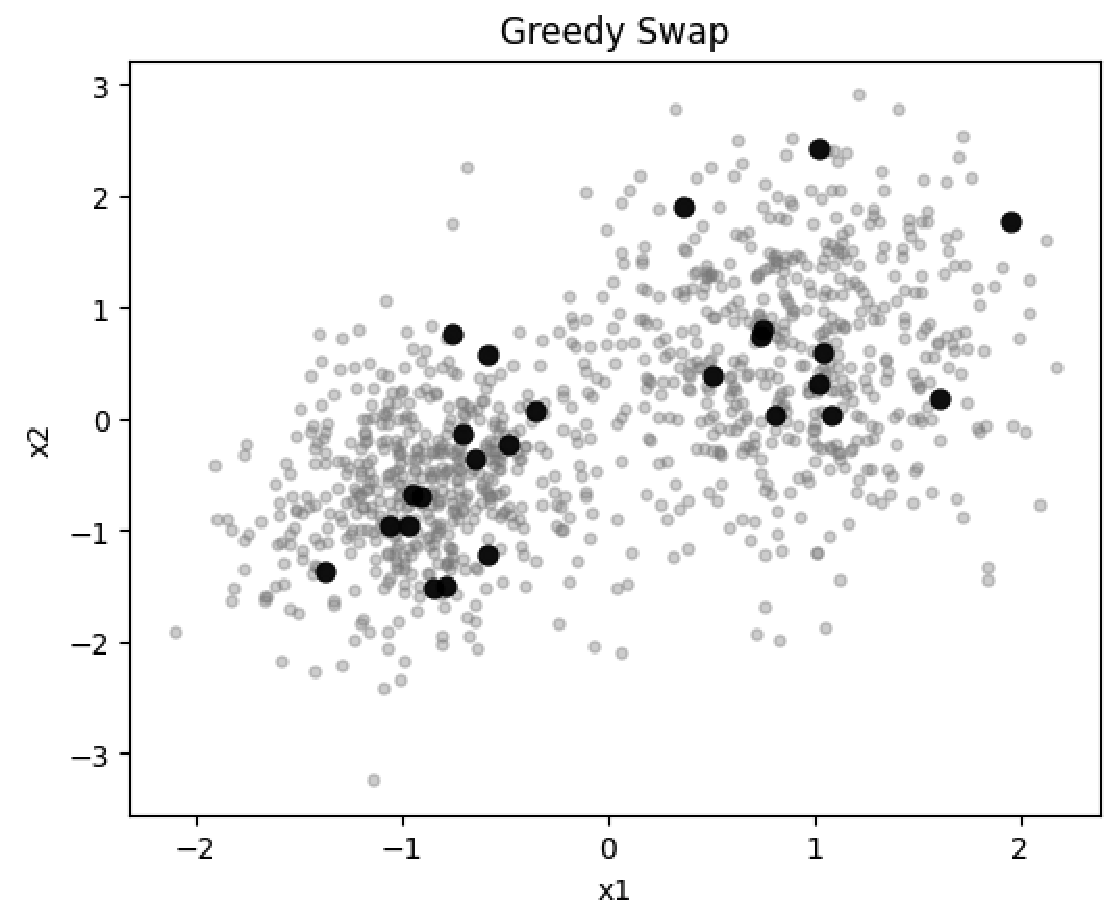}
  \includegraphics[width=0.32\textwidth]{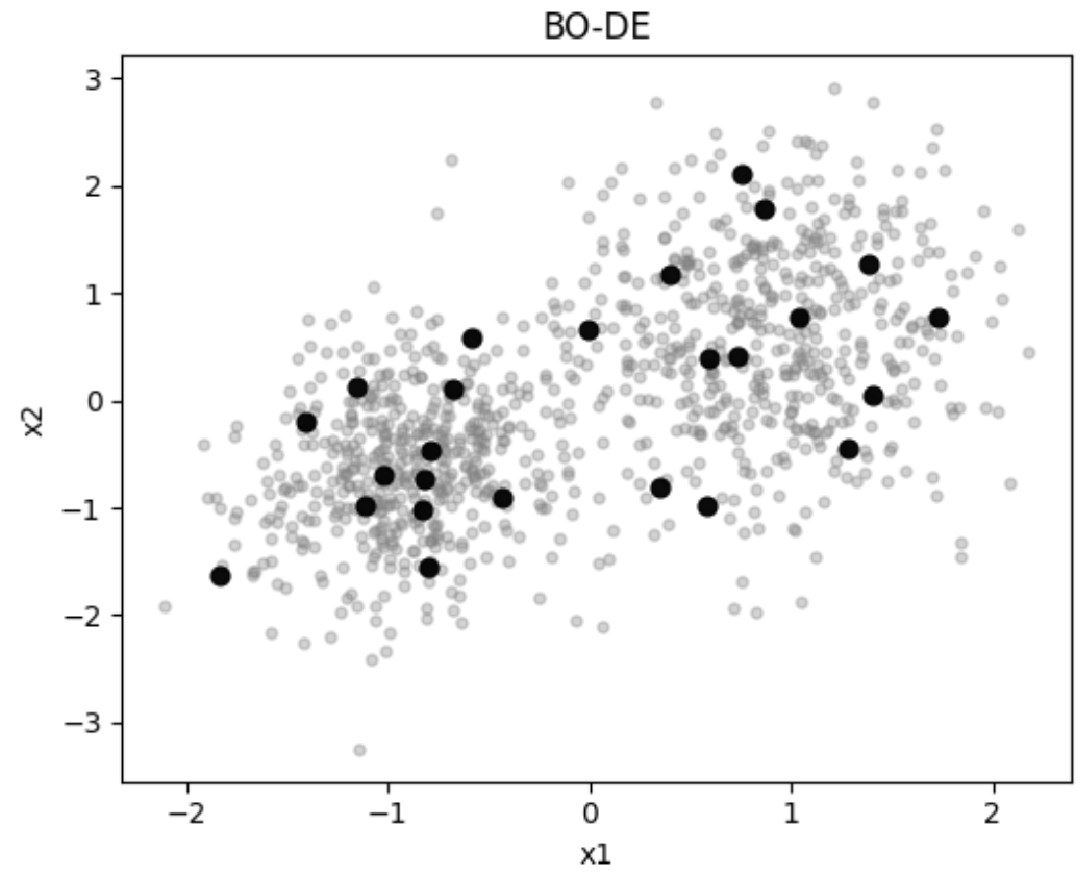}
  \caption{Maximum mean discrepancy minimization for two-component Gaussian mixture for random, GLS, BO-DS and BO-DE methods for $N=1000$ and $m=25$. The resulting subset for GLS (\textbf{Middle}) and BO-DE (\textbf{Right}).}\label{fig:MMDgaussian}
\end{figure}

\begin{figure}[h!]
    \centering
    \includegraphics[width=0.66\linewidth]{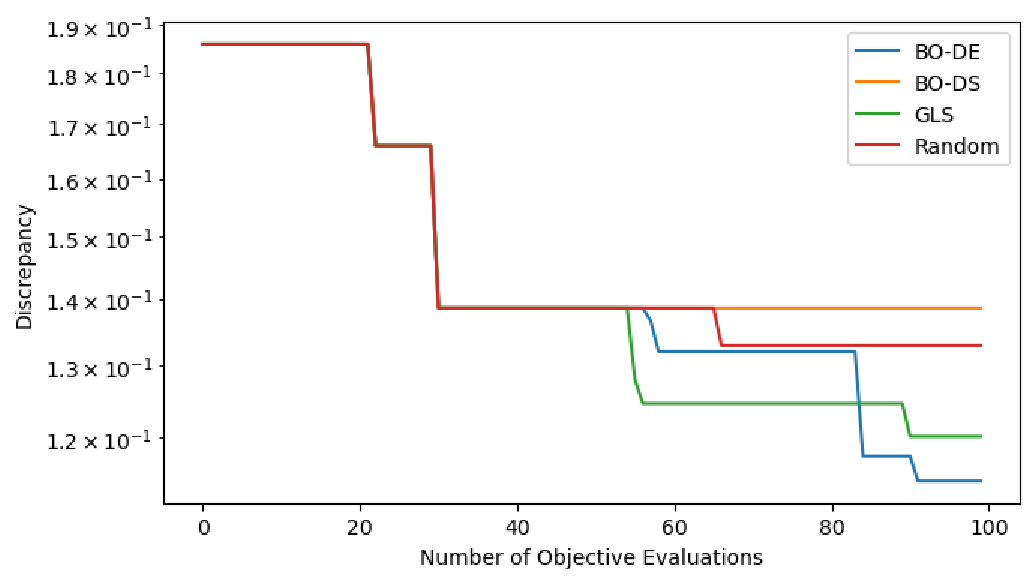}
    \caption{$L_\infty$ star discrepancy minimization for $N=1000$ and $m=25$.}
    \label{fig:star_results}
\end{figure}

\noindent
\textbf{Experiment 3.} Lastly, we consider minimization of the $L_\infty$ star discrepancy \eqref{eq:star_disc} under the same experimental protocol. The $L_\infty$ star discrepancy is well known to be computationally demanding---even the fastest exact algorithms scale on the order of $\mathcal{O}(N^{d/2+1})$; see \cite{DEM}. Consequently, evaluating the true objective becomes increasingly expensive as $N$ grows, making surrogate-based approaches particularly appealing in this setting.

Figure~\ref{fig:star_results} shows the best star discrepancy values. As in both previous experiments, BO-DE achieves the smallest final discrepancy. GLS provides moderate improvement but both BO-DS and uniform random sampling exhibit little to no improvement once again beyond the initial $50$ random evaluations. These results suggest that the DE surrogate is especially beneficial when the objective function is expensive to compute, as is the case for the $L_\infty$ star discrepancy.

\section{Discussion}

In this work, we introduced a Bayesian Optimization framework for selecting low-discrepancy subsets based on several variants of the discrepancy, using deep embedding kernels to construct Gaussian Process surrogates directly over the combinatorial space of subsets. Analysis first established that the subset selection problem with respect to kernel discrepancies is NP-hard in general and we demonstrated the efficacy of the method empirically illustrating that the DE-based surrogate can substantially improve discrepancy reduction with small numbers of true objective evaluations.

Although our experiments focused on standard geometric discrepancies, the methodology is considerably more general. This paper further emphasizes that deep embedding kernels provide a principled way to perform black-box optimization over structured combinatorial domains, an important setting in many modern statistical and scientific domains.

\subsubsection*{Acknowledgments.} This work was partially supported by the U.S. National Science Foundation (NSF DMS–2316011). The author thanks Fred J. Hickernell for the invitation to speak at the 2025 IMSI Workshop ``Kernel Methods in Uncertainty Quantification and Experimental Design", where the initial idea for this paper emerged in conversation with David Ginsbourger. The author also thanks Illinois Institute of Technology student Sanjana Waghray for assistance with exploratory numerical experiments.

\bibliography{sn-bibliography}

\end{document}